\let\originalleft\left
\let\originalright\right
\renewcommand{\left}{\mathopen{}\mathclose\bgroup\originalleft}
\renewcommand{\right}{\aftergroup\egroup\originalright}
\newcommand{\set}[1]{\{#1\}}
\newcommand{\st}{{\ :\ }}
\newcommand{\floor}[1]{\lfloor{#1}\rfloor}
\newcommand{\N}{\mathbb{N}}
\newcommand{\card}[1]{\left|{#1}\right|}
\newcommand\Prob[2]{{\Pr_{#1}\left[ {#2} \right]}}
\newcommand{\yes}{{\textnormal{\sf{YES}}}}
\newcommand{\no}[1]{{\textnormal{\sf{NO}}}_{#1}}
\newcommand{\dham}{{d_{{\textnormal{\sf{HAM}}}}}}
\newtheorem{theorem}{Theorem}[section]
\newtheorem{lemma}[theorem]{Lemma}
\newtheorem{proposition}[theorem]{Proposition}
\newtheorem*{theorem*}{Theorem}
\newtheorem*{claim}{Claim}
\newtheorem*{lemma*}{Lemma}
\theoremstyle{remark}
\newtheorem*{observation}{Observation}
\theoremstyle{definition}
\newtheorem{definition}{Definition}
\newtheorem*{definition*}{Definition}
\title{Stochastic Distance in Property Testing}
\author{
{Uri Meir \thanks{Tel-Aviv University, Israel. Email: \texttt{urimeir@mail.tau.ac.il}}}
\and
{Gregory Schwartzman \thanks{JAIST, Japan. Email: \texttt{greg@jaist.ac.jp}}}
\and
{Yuichi Yoshida \thanks{NII, Japan. Email: \texttt{yyoshida@nii.ac.jp}}}
}
\date{}
\begin{document}

\maketitle

\thispagestyle{empty} 
\setcounter{page}{0}
\begin{abstract}
We introduce a novel concept termed "stochastic distance" for property testing. Diverging from the traditional definition of distance, where a distance $t$ implies that there exist $t$ edges that can be added to ensure a graph possesses a certain property (such as $k$-edge-connectivity), our new notion implies that there is a \emph{high probability} that adding $t$ \emph{random} edges will endow the graph with the desired property. While formulating testers based on this new distance proves challenging in a sequential environment, it is much easier in a distributed setting. Taking $k$-edge-connectivity as a case study, we design ultra-fast testing algorithms in the CONGEST model. Our introduction of stochastic distance offers a more natural fit for the distributed setting, providing a promising avenue for future research in emerging models of computation.
\end{abstract}

\newpage
\section{Introduction}
Property testing has become a major focus in computational research over the years. One of the main goals in this field is to quickly determine if a given structure has a specific property or is far from having it. Traditionally, this "distance" from a property has been defined using the Hamming distance, which counts the number of changes needed to give the structure the desired property. But a question arises: Is this the best way to measure distance in all situations, especially in emerging models of computation?

Consider the following motivating example. Distributed dynamic systems, such as peer-to-peer networks, frequently experience changes in their structure as nodes (clients) join or depart and edges may experience failures. For the sake of resilience against failures, it is imperative for these systems to maintain a topology with certain advantageous features, like $k$-edge-connectivity\footnote{Going forward we simply write $k$-connectivity.}. However, maintaining this throughout the evolution of the network may require resource-intensive corrections, which ideally should be minimized. Still, it might be the case that some topologies are "easy" to correct, in the sense that simply adding a small number of random edges to the graph will make the network topology $k$-connected. This can be seen as a "low-cost" fixing operation, compared to executing an elaborate algorithm that guarantees $k$-connectivity for \emph{every} topology. A question arises: can we detect these easy-to-fix topologies \emph{fast}?

We take a first step in addressing the above and introduce a new distance measure which we call \emph{stochastic distance}. That is, we say that a graph $G = (V,E)$ is $t$-stochastically-close to a property $\mathcal{P}$ if it holds with high probability (w.h.p)\footnote{With probability at least $1-n^{-c}$ for some constant $c>1$. The choice of $c$ does not affect the asymptotics of our results.} that adding (roughly) $t$ random edges to $G$ will make it have property $\mathcal{P}$. Intuitively, the Hamming distance metric can be seen as asking whether the input graph is close to \emph{at least one} graph instance that has a desired property, while our new distance measure asks whether the graph is close to \emph{many instances} that have the property.

To exemplify the usefulness of our new distance measure, we take the property of $k$-connectivity as a case study. This is an extremely desirable graph property in the distributed setting, as it guarantees that the system remains connected even under several edge failures. In Section~\ref{sec:connectivity} we note that the simple case of connectivity already becomes hard to test in the sequential setting. However, using the distributed power of the system we can design extremely fast distributed testers for both connectivity and $k$-connectivity. 

\subsection{Our model and results}
In the distributed setting, a network of nodes, which is represented by a communication
graph $G = (V, E)$, aims to solve some graph problem with respect to $G$. Every node in $G$ has unique ID of $O(\log n)$ bits.
Computation proceeds
in synchronous rounds, in each of which every vertex can send a message to each of its
neighbors. The running time of the algorithm is measured as the number of communication
rounds it takes to finish. Our results hold for the CONGEST model of distributed computation, where messages are limited to $O(\log n)$ bits (where $n = |V|$).

A distributed 1-sided tester \cite{Censor-HillelFS19} for a property $\mathcal{P}$ (or simply a tester) has the following guarantee. If the graph $G$ has the property, then all nodes accept. If the graph is $\epsilon$-far from having the property, then with probability larger than 2/3 \emph{at least} one node rejects. This definition remains the same for both the Hamming distance metric and our stochastic distance measure. 

Note that $\epsilon \in (0,1)$ indicates the distance from the property relative to the number of edges in the graph (i.e., $\epsilon |E|$ in the general model) or to the possible number of edges in the graph (i.e., $\epsilon \binom{n}{2}$ in the dense model). In our model every non-edge is added with probability $t / (\binom{n}{2} - |E|)$ (see Section~\ref{sec: prelims} for an exact definition), therefore it is more natural to use $t\in \mathbb{N}$ as the distance parameter rather than $\epsilon$. That is, our testers decide whether the graph has a certain property or if it is $t$-far from it. We state and prove the following two theorems\footnote{Where $\Tilde{O}$ subsumes factors logarithmic in $n$.}:

\begin{restatable}{theorem}{connectivityTester}
\label{thm:connectivity_tester}
There exists a deterministic algorithm in the CONGEST model, that for a parameter $s\in \mathbb{N}$ runs in $O(s)$ rounds and distinguishes whether the graph $G$ is connected, or is $\Omega((n \log n)/s)$-stochastically-far.
\end{restatable}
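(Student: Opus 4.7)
The tester is built around the observation that the only obstruction to $G$ becoming connected after the addition of few random edges is the presence of a small connected component. The algorithm will therefore try to locally certify the existence of a component of size at most $s$; a single such certificate suffices to reject.

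\noindent\textbf{Algorithm and completeness.} In CONGEST I would proceed in two $O(s)$-round phases. First, $s$ rounds of min-ID propagation, after which every vertex $v$ knows the minimum ID $\pi(v)$ in its $s$-hop ball; the level sets $\{v:\pi(v)=x\}$ form a partition of $V$. Second, for every candidate root $u$ satisfying $\pi(u)=\mathrm{id}(u)$, an $O(s)$-round convergecast along a BFS tree of its level set computes both the size of the level set and a boundary flag, set iff some vertex in the level set has an incident edge leading to a vertex with a different $\pi$-value. Because the level sets partition $V$, the BFS trees of distinct candidate roots are vertex-disjoint, so no congestion arises. A root whose tree is boundary-free and of size at most $s$ has pinned down an entire connected component of $G$ of size $\le s$, and rejects; otherwise every vertex accepts. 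Completeness is immediate: if $G$ is connected and $s<n$, no component of size $\le s$ exists, so no vertex rejects.

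\noindent\textbf{Soundness via a random-graph estimate.} For soundness I would argue by contrapositive: if no vertex rejects then every connected component of $G$ has size strictly greater than $s$, and hence there are $m\le n/s$ components $C_1,\dots,C_m$. I would then show that adding $t=\Theta((n\log n)/s)$ uniformly random non-edges produces a connected graph w.h.p. Consider the contracted super-graph $H$ on $\{C_1,\dots,C_m\}$ in which $\{C_i,C_j\}$ is an edge iff at least one of the $t$ random edges has endpoints in $C_i$ and $C_j$; connectivity of $G$ plus the random edges is equivalent to connectivity of $H$. The main estimate is a union bound over nontrivial bipartitions $(S,\bar S)$ of $[m]$: writing $A=\sum_{i\in S}|C_i|\le n/2$, the number of non-edges of $G$ crossing the cut is exactly $A(n-A)\ge An/2$ (since $G$ has no edges between distinct components), so the probability that none of the $t$ random edges crosses the cut is at most $\exp(-\Omega(tA/n))$. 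Using $A\ge |S|\cdot s$ (every component has more than $s$ vertices) together with $\binom{m}{k}\le(en/(sk))^k$, the choice $t=C(n\log n)/s$ for a sufficiently large constant $C$ makes the union bound sum to $o(1)$.

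\noindent\textbf{The main difficulty.} The heart of the argument is the random-graph estimate above: the hypothesis that every component has size greater than $s$ must be exploited uniformly over all cut sizes $k$, so that the $\binom{m}{k}$ counting factor is always dominated by the exponential decay of cut-avoidance. I also expect to spend some care passing from the true distribution on the added edges (sampling $t$ non-edges without replacement) to a cleaner independent-edges model in which the union bound is carried out, a coupling that should be routine but must be set up carefully before the estimates above can be invoked.
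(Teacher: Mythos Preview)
Your proposal is correct and follows essentially the same structure as the paper: both reduce the problem to detecting a connected component of size at most $s$, and both prove soundness via the same union bound over bipartitions of the components (this is exactly the content of the paper's Lemma~\ref{lem:stochastic_dist_connectivity}). The only real difference is in the $O(s)$-round CONGEST primitive used to find a small component: the paper runs parallel DFS executions with ID-based termination, whereas you use min-ID flooding followed by a BFS/convergecast on each cluster; both are valid and standard. One small remark: you worry about coupling ``$t$ non-edges without replacement'' to an independent-Bernoulli model, but in the paper's definition of $\mathrm{Add}(G,t)$ each non-edge is already added independently with probability $t/|\bar{E}|$, so that coupling step is unnecessary.
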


\begin{restatable}{theorem}{kconnectivityTester}
There exists a randomized algorithm in the CONGEST model, that for a parameter $s\in \mathbb{N}$ runs in $\Tilde{O}(s^4)$ rounds and distinguishes w.h.p whether the graph $G$ is $k$-connected, or is $\Omega((kn \log n)/s)$-stochastically-far from being one. 
\end{restatable}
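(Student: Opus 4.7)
The plan is to reduce the stochastic-distance test to the task of locating a small local obstruction to $k$-edge-connectivity, and then to implement that search efficiently in CONGEST.

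First I would prove a structural lemma: if $G$ is $\Omega(kn\log n/s)$-stochastically-far from being $k$-edge-connected, then a noticeable fraction of the vertices lie in a ``bad'' set $S$ with $|S|\leq s$ and $|E(S, V\setminus S)| < k$. The proof is contrapositive. Assume only a small set $B$ of vertices lies in any bad set, add $t = c\cdot kn\log n/s$ uniformly random non-edges, and show that the resulting graph is $k$-edge-connected with probability $>2/3$. A random edge crosses a cut $(S,V\setminus S)$ with probability at least $2|S|/n$ when $|S|\leq n/2$. For cuts with $|S|\in[s/2,s]$ the expected number of new crossings is $\Omega(k\log n)$, so a Chernoff bound plus a union bound over such cuts succeeds. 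For cuts with $|S|<s/2$ the assumption forces $S\subseteq B$, drastically restricting the enumeration and letting the union bound go through despite the weaker per-cut expectation. An initial $O(1)$-round minimum-degree check handles the singleton obstructions (vertices of degree $<k$), after which one may assume $|S|\geq k+1$ for every bad $S$.

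Next, the algorithm. By the structural lemma, a uniformly random sample of $\Tilde{O}(s)$ vertices contains a bad source $v$ w.h.p.\ whenever $G$ is stochastically-far. Each sampled source $v$ tries to certify that it lies in a bad set via a local greedy expansion on its $s$-neighborhood: start with $S=\{v\}$, and for up to $s$ iterations absorb the external vertex adjacent to $S$ by the most edges, checking after each step whether $|E(S,V\setminus S)|<k$ and rejecting if so. A single expansion can be executed in $\Tilde{O}(s^2)$ CONGEST rounds via pipelined BFS confined to the $s$-ball and local counters. Running $\Tilde{O}(s)$ concurrent sources under random-rank scheduling, together with an additional $\Tilde{O}(s)$ factor for the iterations and amplification, yields the claimed $\Tilde{O}(s^4)$ round bound.

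The main obstacle will be the CONGEST implementation of the many concurrent greedy expansions. Collecting the full $s$-ball at each source is infeasible because its edge count can be $\omega(s^2)$, and the $\Tilde{O}(s)$ parallel sources compete for bandwidth on shared edges. I plan to keep only $O(s)$ counters per source (one per candidate next vertex), update them via incremental local messages, and schedule the concurrent sources so that each edge forwards $O(\log n)$ bits per round. Verifying that the sampling error, the Chernoff union-bound in the structural lemma, and the scheduling all compose into a $1-n^{-\Omega(1)}$ guarantee while keeping the round count in $\Tilde{O}(s^4)$ will be the main technical work.
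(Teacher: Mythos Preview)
Your structural lemma is false: being stochastically far does \emph{not} force a noticeable fraction of vertices to lie in a small low-cut set. For constant $k\geq 2$, let $G$ be the disjoint union of a clique on $n-(k{+}1)$ vertices and a $(k{+}1)$-clique $W$. Every vertex has degree at least $k$, so your degree check does not fire, and every proper nonempty subset of $W$ has cut exactly $k$, so $W$ is the \emph{unique} minimal set with cut below $k$. A direct computation shows $G$ is $\Theta(n\log n)$-stochastically far from being $k$-connected; hence for any constant hidden in the $\Omega$ and any $s$ exceeding a constant, $G$ is a NO instance. Yet only $|W|=k{+}1=O(1)$ vertices lie in any bad set, so your $\tilde{O}(s)$ random sources hit $W$ with probability $\tilde{O}(s)/n=o(1)$ for all $s=o(n)$. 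The paper makes precisely this point: unlike Hamming distance, stochastic farness guarantees only a \emph{single} $(s,k)$-witness, so sampling sources cannot work and the search must be launched from \emph{every} vertex simultaneously.

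Two further issues follow. First, your union-bound sketch cannot be completed as stated: knowing that all size-$\leq s$ bad sets lie inside some $B$ does not ``drastically restrict the enumeration'' unless $|B|$ is logarithmic, and cuts with $|S|>s$ are not addressed at all (their number need not be polynomial when the current connectivity is small). Second, your deterministic rule ``absorb the external vertex with the most edges into $S$'' carries no correctness guarantee even when started inside a witness; a witness that is internally a path already defeats it. The paper instead assigns i.i.d.\ uniform costs to all edges and grows a Prim-style tree from every vertex, invoking the Goldreich--Ron lemma that with probability $\Theta(s^{-2(1-1/k)})$ the witness contains a spanning tree cheaper than every cut edge. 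Congestion from the $n$ concurrent growths is resolved by a priority rule based on the current maximum absorbed edge cost, which guarantees that at least one growth rooted inside the witness survives all collisions. Repeating $\tilde{O}(s^{2})$ times, each costing $O(s^{2})$ rounds, yields the $\tilde{O}(s^{4})$ bound.
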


Intuitively, the above states that the more random edges are added the easier it is to check if the graph will become $k$-connected or not. That is, if $s\approx kn\log n$ then checking whether a constant number of random edges will make the graph $k$-connected is as hard as checking if the graph is connected, which requires traversing the entire graph. If $s=O(1)$, then so many edges are added that the graph almost surely becomes $k$-connected. Our results provide a smooth transition between these two cases. 

\paragraph{Related work}
In (non-distributed) property testing, the objective is to devise algorithms that can distinguish between graphs satisfying a property $\mathcal{P}$ and graphs that are $\epsilon$-far from having the property with a high probability. Testing connectivity properties in the bounded-degree model, where we have query access to the input graph via a list of incidence lists, has been extensively studied, including $k$-connectivity~\cite{GoldreichR02,yoshida2010testing,orenstein2010testing}, $k$-vertex-connectivity~\cite{yoshida2012property,orenstein2010testing}, $(k,l)$-sparsity~\cite{ito2012constant} and supermodular-cut conditions~\cite{tanigawa2015testing}.

The first distributed property testing algorithm was due to \cite{BrakerskiP11}. A thorough study of distributed property testing was initiated in \cite{Censor-HillelFS19}. This was followed by a long line of work, presenting new and improved testers for various properties \cite{AugustineMPV22, EvenFFGLMMOORT17, FraigniaudRST16,FraigniaudHN20,LeviMR21,FichtenbergerV18}. All of these works consider the Hamming distance handed down from the sequential testing model.

$k$-connectivity received a large amount of attention in the distributed literature \cite{Thurimella97, Censor-HillelD20, Dory023,Dory18, DagaHNS19,Parter19, BezdrighinE0GHI22}. There is a large body of work that aims to find \emph{sparse connectivity certificates} -- for a $k$-connected graph the goal is to find a $k$-connected subgraph that has $O(kn)$ edges. Another related problem is computing a $k$-edge-connected spanning subgraph ($k$-ECSS) -- for a $k$-connected graph the goal is to find the \emph{sparsest possible} $k$-connected subgraph. 
We note that both these problems assume that the input graph is $k$-connected, and are therefore inherently different than the problem we consider.

\section{Preliminaries} 
 \label{sec: prelims}

\paragraph{Distance from a property}
We identify simple graphs $G=(V = [n],E)$ with the $\binom{n}{2}$-dimensional characteristic vector of their edge set $E\subseteq \binom{V}{2}$, and use the Hamming metric over these vectors, defined by $\dham(G, G') = \card{\set{i \in \binom{V}{2} \st G(i) \neq G'(i)}}$, where $G(i)$ is the $i^{th}$ entry in the vector $G$.

By extension, the distance of a graph $G$ from a family of graphs $\mathcal{G}$ is:
\[
    \dham(G,\mathcal{G}) = \min_{G'\in\mathcal{G}} \dham(G,G').
\]

A property $\mathcal{P}$ is formally a family of graphs (e.g., all connected graphs). 
Given positive integers $n$ and $t$, we define the $\yes$ case of the corresponding testing problem to be all graphs $G=(V,E)$ in $\mathcal{P}$ over $n$ vertices:
\[
\yes := \set{G \in \set{0,1}^{\binom{V}{2}} \st G\in\mathcal{P}}
\]
We define the $\no{}$ case, denoted $\no{t}'$, to be all graphs $G=(V,E)$ over $n$ vertices with Hamming distance at least $t$ from $\yes$:
\[
\no{t}' := \set{G \in \set{0,1}^{\binom{V}{2}}   \st \dham(G,\yes) \geq t} 
\]

Motivated by connectivity, in the following we only consider monotone non-decreasing properties, 
for which it is easy to see that only additions count 
towards the distance from the property (note that many properties in the literature are monotone non-increasing instead. For these, one can replace additions with deletions). 
Formally, if $\mathcal{P}$ is a monotone non-decreasing property, and $G \notin \mathcal{P}$ is a graph that does not satisfy it, then for any $H\in\mathcal{P}$ with $\dham(G,H) = \dham(G,\mathcal{P})$, we have $E_G \subseteq E_H$ (or alternatively, $\dham(G,H) = \card{\set{i \in \binom{V}{2} \st G(i) = 0 \wedge H(i) = 1}}$).

\paragraph{Stochastic distance} We define \emph{stochastic closeness} to a property as follows:
\begin{definition}[Random addition of edges]
        For a graph $G = (V,E)$, we choose a random subset $E'$ of the edge set $\bar{E} = \binom{V}{2} \setminus E$, by adding each edge with probability $t/\card{\bar{E}}$ independently, for parameter $t\in\left[0,\card{\bar{E}}\right]$. We define the random graph $\mathrm{Add}(G,t) := (V, E \cup E')$. We say that $G' = \mathrm{Add}(G,t)$ is created from $G$ by a \emph{random addition of edges} with parameter $t$.
    \end{definition}
The parameter $t$ should be understood intuitively as the (expected) number of random edges required to make the graph have the property $\mathcal{P}$ w.h.p.
Per our motivation, and in accordance with previous results for testing connectivity, we allow ourselves the mild assumption that $\card{\bar{E}} = \Omega(n^2)$ which clearly holds for any input relevant to our setting. The amount of edges we aim to add, however, is always significantly smaller, $t = o(n^2)$. In particular it is always the case that $t \leq \card{\bar{E}}$, as desired.

   
\begin{definition}[Stochastic closeness to a monotone property]
    \label{def:stochastic_dist}
    For a monotone property $\mathcal{P}$, a graph $G\notin \mathcal{P}$ over $n$ vertices is said to be $t$-stochastically-close to satisfying $\mathcal{P}$ if the graph $G' = \mathrm{Add}(G,t)$ satisfies
    \[
         \Pr[G' \notin \mathcal{P}] \leq n^{-c}
    \]
    for some global constant $c > 1$. As shown in \Cref{sec:definition_robustness}, the constant $c$ can be chosen arbitrarily without affecting stochastic closeness by more than a constant factor.
\end{definition}

We are now able to define the alternative set of NO instances:
\[
    \no{t} := \set{G\in \set{0,1}^{\binom{V}{2}} \st G\notin\yes \text{ and $G$ is not $t$-stochastically-close to $\mathcal{P}$}}
\]
In this text our main focus is solving promise problems of type $(\yes,\no{t})$, rather than $(\yes, \no{t}')$.

\paragraph{Comparing the two notions of distance.}
As an illustrative example, consider two $n$-node graphs, both with exactly two connected components: in $G_1$ a constant-size component is disconnected from the rest of the graph, while in $G_2$ there are two components of the same size, $n/2$.
While both graphs are exactly one edge away from being connected (i.e, both have hamming distance $1$). The situation is quite different if edges are added randomly, instead of being handpicked, leading to different \emph{stochastic} distance. While a small number of edges ($\sim \log n$) are already likely to connect $G_2$, the same amount has only probability $o(1)$ to connect $G_1$.


\section{Warm-up: Connectivity with Stochastic Distance}\label{sec:connectivity}




In this section we deal with the graph connectivity property for a network that is not necessarily connected. This section aims to present the ideas used in the following section about $k$-connectivity. 
We use the following terminology: when a graph is disconnected, write $G = \bigcup_{i=1}^{m} C_i$ as the unique decomposition of $G$ into its connected components $C_1,\dots,C_m$, with $C_i = (V_i,E_i)$, and $s_i := \card{V_i}$ for their sizes.

We make the following observation:
\begin{observation}
    Let $G = \bigcup_{i=1}^{m} C_i$ be a disconnected graph (i.e., $m \geq 2$), and let $s = \frac{1}{m} \sum_{i\in [m]} s_i$, then $G$ is $O(\frac{n}{s})$-close to being connected in Hamming distance.
\end{observation}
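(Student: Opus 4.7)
The observation is essentially immediate, so the plan is short. First I would unpack the definition of $s$: since $s = \frac{1}{m}\sum_{i \in [m]} s_i = \frac{1}{m}\sum_{i\in[m]} \card{V_i} = \frac{n}{m}$, the claimed bound $O(n/s)$ is exactly $O(m)$.

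Next, I would construct an explicit set of edges whose addition to $G$ yields a connected graph, and whose size witnesses the Hamming-distance upper bound. The standard choice is to contract each component $C_i$ to a single representative vertex $v_i \in V_i$, and pick any spanning tree $T$ of the ``component graph'' on $\set{v_1,\dots,v_m}$. For each edge $(v_i,v_j) \in T$, the pair $\set{v_i,v_j}$ is a non-edge of $G$ (since $C_i$ and $C_j$ are distinct components), and adding these $m-1$ pairs to $E$ produces a graph $H$ in which every pair of vertices is connected via a path through components. Thus $H \in \yes$, and by construction $\dham(G,H) = m-1$.

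Finally, combining the two steps, $\dham(G,\yes) \leq m-1 \leq m = n/s$, which is the desired $O(n/s)$ bound. There is essentially no obstacle here; the observation is meant as a sanity check contrasting Hamming distance with the stochastic distance introduced earlier, and the one-line upper bound suffices.
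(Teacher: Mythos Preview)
Your proposal is correct and matches the paper's own argument essentially line for line: the paper also notes $s = n/m$, hence $m = n/s$, and observes that $m-1$ edges suffice to connect the components. Your explicit spanning-tree construction on representatives is just a slightly more detailed version of the paper's one-sentence remark that ``a single edge can be used to connect two components.''
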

Indeed, it holds that $s = \frac{1}{m} \sum_{i\in [m]} s_i = n / m$, which means that $m = n/s$.
As a single edge can be used to connect two components, adding $m-1 = n/s-1$ edges suffices to connect the graph.

The observation is also tight: there exist graphs that require exactly this number of additions.
Joined with a Markovian argument, one can deduce that for any graph that is far from connectivity, there exists \emph{many} small connected components. This is a key argument in the analysis of existing connectivity testers (using Hamming distance) \cite{GoldreichR02}.

The main part of this section deals with proving a statement of similar taste for stochastic distance. While the Hamming distance of a graph from being connected is dictated by the \emph{average} size of a connected component, for stochastic distance this is dictated by the \emph{minimum} size of a connected component. 
As a result, a graph that is far from being connected in stochastic distance is only guaranteed to have \emph{one} small connected component. Formally, we prove the following lemma:
\begin{lemma}
\label{lem:stochastic_dist_connectivity}
    Let $G = \bigcup_{i=1}^{m} C_i$ be a disconnected graph (that is, $m \geq 2$), with components of sizes $s_i = \card{V_i}$.
    Then $G$ is $O\left((n \log n) / s\right)$-stochastically-close to being connected, where $s = \min_{i\in [m]}\set{s_i}$.
\end{lemma}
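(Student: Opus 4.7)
The plan is to take $t = C(n\log n)/s$ for a sufficiently large constant $C$ (depending on the constant $c$ from \Cref{def:stochastic_dist}), and to show $G' := \mathrm{Add}(G,t)$ is connected except with probability at most $n^{-c}$ by union-bounding over ``bad cuts.'' Concretely, $G'$ is disconnected iff there is a proper nonempty $A \subsetneq [m]$ of component indices such that no random edge has been added between $V_A := \bigcup_{i \in A} V_i$ and $V \setminus V_A$, and I will bound the probability of this event summed over all such $A$.

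For a fixed $A$, every pair across the cut (there are $|V_A|(n-|V_A|)$ of them) is a non-edge of $G$, since distinct components share no edges, and each is added to $G'$ independently with probability $p = t/|\bar E| \geq 2t/n^2$. So the probability that no crossing edge appears is at most $\exp(-p|V_A|(n-|V_A|))$. By symmetry I may assume $|V_A| \leq n/2$, so $n - |V_A| \geq n/2$; and since every component has size at least $s$, writing $j := |A|$ yields $|V_A| \geq js$. Combining, the per-cut failure probability is bounded by $\exp(-jts/n) = n^{-Cj}$ for our choice of $t$.

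The number of subsets $A \subseteq [m]$ with $|A| = j$ is at most $\binom{m}{j} \leq (em/j)^j \leq (en/(js))^j$, where the last inequality uses $m \leq n/s$ (since the $m$ components partition $V$ and each has size $\geq s$). A union bound then yields
$$\Pr\bigl[G' \text{ disconnected}\bigr] \;\leq\; \sum_{j=1}^{m} \left(\frac{en}{js}\right)^j n^{-Cj} \;\leq\; \sum_{j \geq 1} \bigl(e\, n^{1-C}\bigr)^j,$$
a convergent geometric series bounded by $n^{-c}$ once $C \geq c+2$ and $n$ is large enough.

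The main obstacle is the exponential blow-up in the number of candidate cuts (there are up to $2^m$ of them), which threatens to swamp any single-cut tail bound. The way around this is to exploit the minimum-component-size hypothesis twice: $m \leq n/s$ keeps $\binom{m}{j}$ small, while $|V_A| \geq js$ strengthens the per-cut exponent, so that $n^{-Cj}$ dominates the combinatorial count uniformly in $j$. The bound is tight up to constants, as witnessed by a single component of size exactly $s$, which by itself needs $\Omega((n\log n)/s)$ random edges to attach to the rest with high probability.
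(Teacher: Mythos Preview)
Your proof is correct and follows essentially the same approach as the paper: both argue that $G'$ remains disconnected iff some union of components $V_A$ has no added crossing edge, bound the per-cut failure probability via $|V_A|(n-|V_A|) = \Omega(jsn)$ (where $j=|A|$), and then union-bound over $\binom{m}{j}$ choices summed over $j$. The only cosmetic differences are that the paper restricts to $j \le \lfloor m/2 \rfloor$ and uses the unimodality bound $|U|(n-|U|) \ge js(n-js)$, whereas you restrict by symmetry to $|V_A| \le n/2$ and use $|V_A|(n-|V_A|) \ge js \cdot n/2$; both yield the same $n^{-\Theta(j)}$ tail that dominates the $\binom{m}{j}$ count.
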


\begin{proof}
    Recall that when considering stochastic distance every non-edge is added with some probability $p$. Consider the set of bad events $\set{B_k}_{k=1}^{\floor{m/2}}$, where $B_k$ is the event that there exists a set of exactly $k$ connected components $C_{i_1},\dots C_{i_k}$ such that none of the edges between $\bigcup_{j=1}^{k} C_{i_j}$ and the rest of the graph are added. 
    The important observation is that the graph stays disconnected if and only if one of these bad events occurs. We go on to bound the probability of each of these bad events.
    
    Fix $k$. Our goal is to bound $\Pr[B_k]$.
    There are $\binom{m}{k}$ ways to choose a subset of $k$ components. Fix one such choice $i_1,\dots i_k$ and denote the set of vertices in these components by $U = \bigcup_{j=1}^{k} V_{i_j}$.
    As each component is of size at least $s$, we have $\card{U} \geq ks$, applying the same reasoning to the remaining graph, made of the rest of the components: $\card{V\setminus U} \geq (m-k)s \geq ks$ nodes (the second inequality holds since $k \leq m/2$).
    Thus, we have $ks \leq \card{U} \leq n-ks$, and consequently we get:
    \[
        \card{U} \cdot \card{V\setminus U} = \card{U} \cdot (n-\card{U}) \geq ks (n-ks).
    \]
    where the inequality is true since $f(x) = x(n-x)$ is unimodal and symmetric on the interval $[ks,n-ks]$.
    The probability of $U$ staying disconnected from $V \setminus U$ is thus bounded by:
    \[
        (1-p)^{\card{U} (n-\card{U})} \leq e^{- p \card{U} (n-\card{U})} \leq e^{- p ks (n-ks)}
    \]
    And by a union bound over all choices of $U$ with $k$ connected components, we get:
    \[
        \Pr[B_k] \leq \binom{m}{k} e^{- p ks (n-ks)} \leq e^{-p ks (n-ks) + k \log m} \leq e^{-p ksn + p(ks)^2 + k \log n}
    \]
    Thus, by taking $p = 2(c+2)\log n /(sn)$, the expression in the exponent is bounded by:
    \[
        -\frac{2(c+2) \log n \cdot k s n}{s n} + \frac{2(c+2) \log n (ks)^2}{s n} + k \log n =
        (2(c+2)(ks/n -1) + 1) k\log n \leq -(c+1)k\log n
    \]
    where the last inequality uses $ks/n \leq (m/2)s/n \leq 1/2$. This, in turn, bounds the probability of the bad event by: 
    \[
        \Pr[B_k] \leq e^{-(c+1)k \log n} \leq \left(\frac{1}{n}\right)^{-(c+1)k}
    \]
    Union bounding over all values of $k$, we get
    \[
        \Pr\left[\bigvee_{k=1}^{\floor{m/2}}B_k\right] \leq \sum_{k=1}^{\floor{m/2}} n^{-(c+1)k} \leq \sum_{k=1}^{\infty} n^{-(c+1)k} \leq 2n^{-(c+1)} \leq n^{-c}
    \]
    where the second to last inequality uses the sum of an infinite geometric series with $n^{-(c+1)} \leq 1/2$, and the last simply uses $n\geq 2$.

    Finally, we get that $G$ is connected w.h.p. This implies that $G$ is $t$-stochastically-close to being connected, with $t = O\left((n \log n) / s\right)$ (recall that $\card{\bar{E}} = \Omega(n^2)$).
    
    
\end{proof}

    \paragraph{Tightness of ~\Cref{lem:stochastic_dist_connectivity}.}
    We focus on the smallest connected component, $V_i$ of size $s_i = s$. There are at most $s(n-s) \leq sn$ potential edges to connect $V_i$ to the rest of the graph. If we take $p' = c \log n /(4 sn)$, then the probability none of them is added satisfies:
    \[
        (1 - p')^{s(n-s)}
        \geq (1 - p')^{sn}
        \geq e^{-2p' sn}
        = e^{-c \log n / 2}
        = n^{-c/2}
        > n^{-c} ,
    \]
    where the second inequality uses $1 - x \geq e^{-2x}$, which holds for $x\in[0,1/2]$. 
    

    Using the counter-positive of Lemma~\ref{lem:stochastic_dist_connectivity}, a graph $G$ that is $O((n\log n)/s)$-stochastically-far from being connected is guaranteed to have a connected component of size \emph{at most} $O\left(s\right)$. We state the following theorem.

\connectivityTester*
    \begin{proof}
    Assume that the parameter $s$ is known to all nodes in the network.
    Each node runs a distributed DFS algorithm. Every execution is associated with an ID, which is simply the ID of the root of the DFS tree. When multiple DFS executions visit the same node, all execution are terminated, except that with the maximum ID. When a DFS execution visits $s$ nodes, it terminates and checks whether there is an outgoing edge from any visited node to any unvisited node. If there is, then it accepts (i.e., the graph is connected), otherwise it rejects (i.e., the graph is far from being connected). Correctness follows from~\Cref{lem:stochastic_dist_connectivity}. It is clear that there is no congestion and that the algorithm terminates in $O(s)$ iterations.

    \end{proof}


\section{Stochastic Distance from $k$-Connectivity}
\label{sec:k-connectivity}

This section deals with $k$-connectivity, for any $k\geq 1$. 
We aim to characterize stochastic distance of a graph in terms of cuts, focusing on a specific parameter-of-interest denoted by $s_k(G)$: the smallest size of a vertex set that has a cut strictly smaller than $k$. In terms of this parameter, the following can be shown:

\begin{theorem}
\label{thm:stochastic_dist_k-connectivity}
    Let $G = (V,E)$ and $s = s_k(G)$, then $G$ is $O\left((k\cdot n \log n) / s\right)$-stochastically-close to being $k$-connected.
\end{theorem}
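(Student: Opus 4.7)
The plan is to set the addition parameter to $p = C k \log n/(sn)$ for a sufficiently large absolute constant $C$, so that $t = p|\bar{E}| = O(kn\log n/s)$ matches the target bound, and show that $G' = \mathrm{Add}(G,t)$ is $k$-connected with probability at least $1 - n^{-c}$. The starting observation is that, since $G \subseteq G'$, the only cuts $(U, \bar U)$ that can witness non-$k$-connectivity of $G'$ are those already small in $G$, i.e., $c_U := |E_G(U,\bar U)| < k$. By definition of $s = s_k(G)$, every such cut satisfies $|U|, |\bar U| \geq s$, and I may assume $|U| \leq n/2$ by symmetry. A Chernoff bound on the (essentially) binomial random variable $X_U$ counting newly added cut-edges yields $\Pr[X_U < k] \leq e^{-\mu_U/8} \leq n^{-C|U|k/(32s)}$, since $\mu_U \geq C|U|k\log n/(4s)$ (and $\mu_U \geq 2k$ for $C$ large). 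The rest of the proof union-bounds this per-cut tail bound over all small cuts of $G$, split into two cases.

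In Case A, $c_U = 0$, so $U$ is exactly a union of connected components of $G$. Since every component is itself a small cut, each has size $\geq s$, so the total number of components is $m \leq n/s$. Parameterizing by the number $j$ of components inside $U$ (WLOG $j \leq m/2$), there are $\binom{m}{j} \leq (en/(js))^j$ choices and $|U| \geq js$, giving
\[
\sum_{j=1}^{\lfloor m/2 \rfloor} \binom{m}{j} \, n^{-Cjk/32} \leq \sum_{j \geq 1} n^{j(1 + o(1) - Ck/32)} \leq \tfrac{1}{2} n^{-c}
\]
for $C$ a sufficiently large constant. This step closely mirrors the warm-up proof of \Cref{lem:stochastic_dist_connectivity}.

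In Case B, $U$ is not a union of components, so some components are ``split'' by $U$: for each $j \in I_2$, the intersection $U_j := U \cap C_j$ is a proper nonempty subset of $C_j$. The key observation (and the crux of the proof) is that $(U_j, V\setminus U_j)$ is itself a small cut of $G$: because $C_j$ is a connected component, no edges leave it in $G$, so $|E_G(U_j, V\setminus U_j)| = |E_{C_j}(U_j, C_j\setminus U_j)| \leq c_U < k$. By the definition of $s$, therefore, $|U_j| \geq s$; since each split contributes at least one edge to $c_U < k$, the number of split components is $\ell := |I_2| \leq k - 1$, and $u_0 := \sum_{j\in I_2} |U_j| \geq \ell s$. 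I enumerate Case B cuts by fixing a ``partial structure''---the set $I_2$ together with a proper cut $U_j$ inside each $C_j$, of which by Karger's theorem there are at most $|C_j|^{2(k-1)} \leq n^{2(k-1)}$ possibilities---and then summing $n^{-C|U|k/(32s)}$ over the $2^{m-\ell}$ choices of untouched components placed in $U$. Each untouched-component factor is at most $1 + n^{-Ck/32}$ (using $s_i \geq s$), so their product is $\leq e$, and the residual factor is $n^{-Cu_0 k/(32s)} \leq n^{-C\ell k/32}$. The total Case B contribution is thus at most
\[
\sum_{\ell=1}^{k-1} m^\ell \cdot n^{2(k-1)\ell} \cdot e \cdot n^{-C\ell k/32} \leq e \sum_{\ell} n^{\ell(2k - 1 - Ck/32)} \leq \tfrac{1}{2} n^{-c}
\]
for $C$ a large enough absolute constant. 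The main obstacle is precisely Case B: a graph with many small components can have exponentially many cuts of size $<k$, so the naive $\binom{n}{|U|}$ union bound is hopeless; the saving grace is that each split piece is itself a small cut of $G$, forcing $|U_j| \geq s$ and thereby allowing the Karger factor $n^{O(k)}$ per split component to be cancelled uniformly in $s$ by the Chernoff savings, so that $C$ can be chosen as an absolute constant independent of both $k$ and $s$.
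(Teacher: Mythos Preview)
Your argument is correct, and it reaches the same $O(kn\log n/s)$ bound, but the route is genuinely different from the paper's.

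The paper does \emph{not} union-bound over all small cuts of $G$ at once. Instead it proves an auxiliary lemma (\Cref{lem:increase_connectivity_whp}): if $G$ is already $(k-1)$-connected, then adding each non-edge with probability $O(\log n /(s_k n))$ makes $G$ $k$-connected w.h.p. The point of the $(k-1)$-connectivity hypothesis is that the only dangerous cuts are the \emph{minimum} cuts, so Karger's $\binom{n}{2}$ bound applies directly and the union bound is trivial. The full theorem is then obtained by an iterative coupling: run $k$ rounds of additions with this per-round probability (process~A), replace each round's probability by the larger fixed value $p' = O(\log n /(s_k(G)\,n))$ using the monotonicity of $s_k(\cdot)$ (process~B), and finally dominate the $k$ rounds by a single round with probability $kp'$ (process~C). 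Monotonicity of $k$-connectivity finishes.

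Your proof skips the iteration and attacks all cuts $U$ with $c_U<k$ simultaneously. The price is that $G$ may be disconnected, so the number of such cuts can be exponential and Karger cannot be invoked globally; your Case~B decomposition is exactly what is needed to get around this. The key structural observation you use---that each split piece $U_j = U\cap C_j$ is itself a cut of size $<k$, hence $\card{U_j}\ge s$ and the number of split components is at most $k-1$---lets you apply Karger \emph{inside} each connected component and pay only $n^{O(k)}$ per split, which is then killed by the Chernoff savings $n^{-\Theta(k)}$ per split. This is a nice self-contained argument; it avoids the coupling step entirely. The paper's approach, on the other hand, isolates a clean reusable statement (\Cref{lem:increase_connectivity_whp}) and keeps the combinatorics minimal, at the cost of the extra iteration/coupling layer.
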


Similarly to \Cref{lem:stochastic_dist_connectivity}, this theorem is tight, up to a factor of $k$.
The rest of this section deals with proving the above theorem. First the parameter $s_k(G)$ is formally defined and discussed. Later, we generalize Lemma~\ref{lem:stochastic_dist_connectivity} to bound the number of random edge additions required to increase the connectivity of a graph by one. In the last part we formally prove Theorem~\ref{thm:stochastic_dist_k-connectivity}.

\subsection{Properties of minimal cuts}
\label{subsec:preparation}
For any vertex set, $U\subseteq V$, we write the cut size of $U$ as $c(U) := \card{E(U, V\setminus U)}$. We use $d(t) := t(n-t)$ for the \emph{potential} amount of edges between any $t$ vertices and the rest of the graph, and for a specific vertex set, $U$, we slightly abuse notation and write $d(U) = d(\card{U})$.

For any parameter $k\in\N$ and graph $G = (V,E)$, we define the collection of small cuts as the cuts that are strictly smaller than $k$:
\[
    \mathcal{S}_k(G) := \set{U \subseteq V \st c(U)<k \text{ and } U\neq\emptyset}
\]
A value of interest for us will be the size of the smallest set in such a collection. Formally we define:
    \[
        s_k(G):= \min_{U\in \mathcal{S}_k(G)} \card{U} 
    \]
When the connectivity parameter $k$ and the graph $G$ are clear from context, we omit either one or both (writing $s_k, s(G)$ or simply $s$ instead).
If the collection is empty for some $k$ and $G$, we define $s_k(G) = n$. Note that $s_k(G) = n$ if and only if $G$ is indeed $k$-connected.

We show that $s_k(G)$ has certain monotonicity properties with respect to the parameters $G$ and $k$. Indeed, by definition a larger value for $k$ creates a larger collection of cuts, i.e., $\mathcal{S}_{k}(G) \subseteq \mathcal{S}_{k+1}(G)$. On the other hand, adding edges to the graph can only increase the cut of any given vertex set, $U$. Thus, any $U$ with a small cut after additions, also had a small cut before additions. Formally, if $G \subseteq G'$ (i.e., $G=(V,E), G'=(V,E')$ and $E \subseteq E'$), we have $\mathcal{S}_k(G') \subseteq \mathcal{S}_k(G)$.
The minimum over elements in a collection can only decrease if we add elements to the collection, and hence we have:
\begin{observation}[Monotonicity of $s_k(G)$]
    \label{parameter_monotonicity}
    The parameter $s_k(G)$ is monotone non-increasing in $k$, and monotone non-decreasing in $G$.
\end{observation}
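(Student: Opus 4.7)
The plan is to derive both monotonicity assertions directly from two set inclusions between the collections of small cuts, and then invoke the trivial fact that the minimum over a larger set can only decrease. For the $k$-monotonicity, if $U \in \mathcal{S}_k(G)$ then $c(U) < k \leq k+1$ and $U \neq \emptyset$, so $U \in \mathcal{S}_{k+1}(G)$. Hence $\mathcal{S}_k(G) \subseteq \mathcal{S}_{k+1}(G)$, and taking minima gives $s_{k+1}(G) \leq s_k(G)$ whenever $\mathcal{S}_k(G)$ is nonempty. If $\mathcal{S}_k(G)$ is empty we have $s_k(G) = n$ by convention, and $s_{k+1}(G) \leq n$ automatically, so the inequality still holds.

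For the $G$-monotonicity, suppose $G = (V, E)$ and $G' = (V, E')$ with $E \subseteq E'$. For any $U \subseteq V$, every edge of $E$ crossing the cut $(U, V \setminus U)$ also lies in $E'$ and still crosses the same cut, so $c_G(U) \leq c_{G'}(U)$. Therefore $U \in \mathcal{S}_k(G')$ (i.e.\ $c_{G'}(U) < k$ and $U \neq \emptyset$) implies $c_G(U) < k$, giving $\mathcal{S}_k(G') \subseteq \mathcal{S}_k(G)$. Taking minima yields $s_k(G) \leq s_k(G')$, with the $\mathcal{S}_k(G') = \emptyset$ case handled by the same convention as before.

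There is no real obstacle here: both parts are direct consequences of the definition of $\mathcal{S}_k$ together with the fact that cut sizes can only grow when edges are added. The only subtlety worth flagging is remembering the convention $s_k = n$ when the collection of small cuts is empty, which ensures the inequalities remain valid at the boundary where one graph (or one value of $k$) admits no small cuts.
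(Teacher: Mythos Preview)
Your argument is correct and matches the paper's reasoning exactly: both establish the inclusions $\mathcal{S}_k(G) \subseteq \mathcal{S}_{k+1}(G)$ and $\mathcal{S}_k(G') \subseteq \mathcal{S}_k(G)$ and then observe that the minimum over a larger collection can only decrease. Your explicit treatment of the empty-collection convention is a small addition the paper leaves implicit, but the approach is the same.
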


Below, we will be interested in finding the smallest vertex set $U$ in the collection $\mathcal{S}_k(G)$. This element determines the value $s$. 
The following proposition shows a property of such a set $U$ that will be useful for the algorithm:
\begin{proposition}
    Fix $k\in\N$ and a graph $G = (V,E)$. For any subset $U\subseteq V$ with a cut less than $k$ ($c(U) < k$) and of minimal size ($\card{U} = s_k$) the induced subgraph $G[U]$ is connected.
\end{proposition}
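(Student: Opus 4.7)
The plan is to proceed by contradiction: assume $U$ is a minimum-size element of $\mathcal{S}_k(G)$ but that $G[U]$ is disconnected, and produce a strictly smaller element of $\mathcal{S}_k(G)$, contradicting the minimality that defines $s_k$.

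Concretely, if $G[U]$ is disconnected then its vertex set decomposes into connected components. Pick any one component $U_1$ and let $U_2 = U \setminus U_1$; both are non-empty proper subsets of $U$, and crucially there are no edges of $G$ between $U_1$ and $U_2$. This last property is exactly what makes the cut function additive on the partition: every edge of $G$ with exactly one endpoint in $U_1$ must have its other endpoint in $V \setminus U$ (not in $U_2$), and symmetrically for $U_2$. Therefore the cut edges of $U$ split cleanly according to which side they come from, giving
\[
    c(U_1) + c(U_2) = c(U) < k.
\]

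Since $c(U_1)$ and $c(U_2)$ are non-negative integers summing to something less than $k$, each of them is individually less than $k$. In particular $U_1$ is a non-empty subset of $V$ with $c(U_1) < k$, so $U_1 \in \mathcal{S}_k(G)$. But $|U_1| < |U_1| + |U_2| = |U| = s_k$, contradicting the definition of $s_k$ as the minimum size of an element of $\mathcal{S}_k(G)$. Hence $G[U]$ must be connected.

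I do not anticipate a real obstacle here: the only thing to be careful about is the additivity identity $c(U) = c(U_1) + c(U_2)$, which would fail if any edges ran between the two components of $G[U]$, and by definition of connected components no such edges exist. The argument is intrinsically symmetric in $U_1$ and $U_2$, so there is nothing to lose by taking $U_1$ to be the smaller component; but we do not even need that, since any non-empty proper subcomponent already has strictly smaller cardinality than $U$.
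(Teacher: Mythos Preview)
Your proof is correct and follows essentially the same approach as the paper: both argue by contradiction, decompose $G[U]$ into components, and use the absence of edges between components to show that a component $U_i$ satisfies $c(U_i) \leq c(U) < k$ while $\card{U_i} < \card{U}$. The only cosmetic difference is that you prove the slightly stronger additivity $c(U_1)+c(U_2)=c(U)$, whereas the paper just bounds $c(U_i) = \card{E(U_i, V\setminus U)} \leq \card{E(U, V\setminus U)} = c(U)$ directly.
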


\begin{proof}
    Assume otherwise, then we can break $G[U]$ into its connected components: $U = \bigcup_{i} U_i$. Each such component has strictly less nodes than $U$. Moreover, we have:
    \begin{align*}
        c(U_i)
        = E(U_i, V\setminus U_i)
        = E(U_i, V\setminus U)
        \leq E(U, V\setminus U) 
        = c(U) ,
    \end{align*}
    where the second equality holds since there are no edges between different components of $U$.
    We conclude that each $U_i$ has a cut of size at most $c(U) < k$, but their size is strictly smaller than $s$, which leads to a contradiction.  
\end{proof}

\subsection{Increasing the connectivity by one}
\label{subsection:increase_connectivity}
Using the notations of this section, the statement of Lemma~\ref{lem:stochastic_dist_connectivity} is that any graph that is ``$0$-connected'' (i.e., disconnected) is $O\left(n \log(n) / s_1\right)$-stochastically-close to being $1$-connected. This is true since $s_1$ is exactly the size of the smallest connected component in $G$.

We generalize this statement for $k$-connectivity:

\begin{lemma}
    \label{lem:increase_connectivity_whp}
    Any $(k-1)$-connected graph $G$ with $n \geq 4k$ nodes is $O\left(n \log(n) / s_k\right)$-stochastically-close to being $k$-connected.
\end{lemma}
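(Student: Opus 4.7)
The plan is to show that, with random-edge probability $p = \Theta(\log n / (sn))$ where $s := s_k(G)$, every cut of $G$ of value less than $k$ is broken by at least one random edge, so that $G' := \mathrm{Add}(G,t)$ (with the corresponding expected-edge count $t$) is $k$-connected with probability $\geq 1 - n^{-c}$. If $G$ is already $k$-connected then $s_k(G) = n$ and the claim is vacuous, so assume $\mathcal{S}_k(G) \neq \emptyset$. Since $G$ is $(k-1)$-edge-connected, every $U \in \mathcal{S}_k(G)$ must satisfy $c(U) = k-1$ exactly (it is at least $k-1$ by assumption and strictly less than $k$ by definition of $\mathcal{S}_k(G)$). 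Thus $\mathcal{S}_k(G)$ coincides with the family of (both sides of) minimum-cut-sets of $G$, and the event that $G'$ is not $k$-connected is exactly the event that some $U \in \mathcal{S}_k(G)$ has no random edge across $(U, V\setminus U)$.

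The key external ingredient is Karger's classical bound: the number of minimum cuts in an $n$-vertex undirected graph is at most $\binom{n}{2}$. Counting both halves of each minimum cut gives $|\mathcal{S}_k(G)| \leq n(n-1) \leq n^2$.

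Next I would bound the per-cut failure probability. For $U \in \mathcal{S}_k(G)$ with $|U| = t$, WLOG $t \leq n/2$ (otherwise swap with $V\setminus U$, which is also in $\mathcal{S}_k(G)$ and yields the same event). The number of non-edges between $U$ and $V\setminus U$ equals $t(n-t) - (k-1)$, which is at least $sn/2 - k \geq sn/4$ using $t \geq s$, $n-t \geq n/2$, $n \geq 4k$, and $s \geq 1$. Hence the probability that no random edge crosses $(U, V\setminus U)$ is at most $(1-p)^{sn/4} \leq e^{-psn/4}$. Union bounding over $|\mathcal{S}_k(G)| \leq n^2$ cuts gives
\[
    \Pr[G' \text{ is not } k\text{-connected}] \leq n^2 \cdot e^{-psn/4}.
\]
Choosing $p = 4(c+2)\log n / (sn)$ makes the right-hand side at most $n^{-c}$, as required by \Cref{def:stochastic_dist}. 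The expected number of random edges added is $p \cdot |\bar E| \leq p \cdot \binom{n}{2} = O(n \log n / s)$, which is precisely the claimed $O((n \log n)/s_k)$-stochastic-closeness.

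The main obstacle is the bound $|\mathcal{S}_k(G)| \leq O(n^2)$. The crude union bound over $\binom{n}{t}$ subsets of each size $t \geq s$ is too weak when $s$ is large: with $p = \Theta(\log n /(sn))$ the exponent $pn/2$ can be much smaller than $\log n$, so $\sum_t \binom{n}{t} e^{-ptn/2}$ does not converge. Karger's minimum-cut bound (or equivalently the cactus representation of minimum cuts) is the clean way to circumvent this, yielding a size-independent $O(n^2)$ count that exactly matches the slack we get from the per-cut failure bound.
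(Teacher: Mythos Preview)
Your proof is correct and essentially identical to the paper's: both identify $\mathcal{S}_k(G)$ with the family of minimum cuts of the $(k-1)$-connected graph, bound the per-cut failure probability via $|U|(n-|U|)-(k-1)\geq s_k n/4$ using $n\geq 4k$, choose $p=4(c+2)\log n/(s_k n)$, and invoke Karger's $\binom{n}{2}$ minimum-cut count for the union bound. The only cosmetic difference is that the paper explicitly defers the case $k=1$ to \Cref{lem:stochastic_dist_connectivity}, since Karger's bound requires a connected graph (when $k=1$ a disconnected $G$ may have exponentially many zero-cuts, so your Karger step as written implicitly assumes $k\geq 2$).
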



\begin{proof}
    For $k = 1$, the proof is complete due to Lemma~\ref{lem:stochastic_dist_connectivity}. 
 For $k > 1$, let $G$ be a $(k-1)$-connected graph. Thus, all minimum cuts have $k-1 \geq 1$ edges ($G$ is connected). We estimate the number of random edges needed to make all sets in $\mathcal{S}_k(G)$ have a cut of size at least $k$.
 
    
    We identify minimum cut with a partition of $V$ to $U$ and $V\setminus U$ (w.l.o.g, $\card{U} \leq \floor{n/2}$). As in Lemma~\ref{lem:stochastic_dist_connectivity} assume that every non-edge is added with probability $p$.
    Consider the event $B_U$ that $U$ remains with cut of size $k-1$. That is, not a single edge was added between $U$ and $V\setminus U$. There are exactly $d(U) - (k-1)$ potential additions, and so:
    \[
        \Prob{}{B_U} = (1-p)^{d(U)-(k-1)}
        = (1-p)^{\card{U}(n-\card{U}) -(k-1)}
        \leq (1-p)^{s_k\cdot n/4}
        \leq e^{-s_k\cdot pn/4} ,
    \]
    where the first inequality uses $n-\card{U} \geq n/2\ , \ \card{U} \geq s_k(G)$ and $k \leq \card{U}n/4$ to lower bound the exponent.
    
    By taking $p = 4(c+2) \log n /(s_k n)$, this bound becomes
    \[
        e^{-s_k\cdot pn/4}
        = e^{-(c+2)\log n}
        = n^{-(c+2)}
    \]
    
    We finish the proof by using a union bound over all minimum cuts. There are at most $\binom{n}{2} \leq n^2$ of them due to well known corollary of Karger's Algorithm \cite{Karger93}. The probability that $G$ is not $k$-connected after the edge additions is at most
    \[
        \sum_{U\in\mathcal{S}_k(G)} \Prob{}{B_U}
        \leq n^2 \cdot n^{-(c+2)} = n^{-c} .
        \qedhere
    \]
\end{proof}

\subsection{Putting it all together}
\label{subsection:conclude_k_connectivity}

We are now ready to prove Theorem~\ref{thm:stochastic_dist_k-connectivity}.
Our proof considers an alternative addition process which is easier to analyze. Intuitively, let $p' \approx p/k$ and consider the following process. ``Repeat $k$ times: add each edge to the graph independently with probability $p'$''. We formalize the above and bound the required value of $p'$ for the alternative process, analyzing each of the $k$ iterations separately with Lemma~\ref{lem:increase_connectivity_whp}. We finish up by relating the original addition processes to the alternative one.



\begin{proof}[Proof of Theorem~\ref{thm:stochastic_dist_k-connectivity}]
    Denote by $r<k$ the connectivity of $G$ (if $G$ is disconnected, then $r=0$).
    We consider $k-r$ (at most $k$) iterations of additions, which we enumerate by $r+1,\dots, k$. We further denote by $G_i$ the graph after iteration $i$ (and $G_r = G$).
    
    First we consider process (A), where at each iteration $i\in\set{r+1,\dots,k}$, each edge is independently added with probability 
    $$ p_i = 4(c+3)n\log n / s_{i+1}(G_{i}) .$$
    Applying Lemma~\ref{lem:increase_connectivity_whp} for each iteration $i$ we have the following:
    The probability that $G_i$ is $i$-connected, \emph{given that $G_{i-1}$ is $(i-1)$-connected} is at least $1 - n^{-(c+1)}$.
    Union bounding over all $k-r \leq n$ iterations, we have that with probability at least $1- n^{-c}$, the final graph $G_k$ is $k$-connected.
    
    Next, consider an adjusted process (B) that also works in iterations, but uses a fixed value
    $$p' = 4(c+3)n\log n / s_{k}(G) .$$
    
    Using the monotonicity of $s_k(G)$, we have
    \[
        s_k(G) = s_k(G_r) \leq s_{i+1}(G_r) \leq s_{i+1}(G_{i}) ,
    \] 
    and thus $p_i \leq p'$ for all values of $i$.
    
    Let us focus on a single non-edge of $G$, call it $(u,v)$, and follow it through the entire process.
    The probability $(u,v)$ is added to $G_k$ is higher in process (B) than it is in process (A), since $p' \geq p_i$ for all $i$. On the other hand, this probability is at most $p'k$, by a union bound over all iterations of process (B).
    
    Lastly, consider process (C), a one-shot random addition, using
    $$p = p'k = 4(c+3)k n \log n / s_{k}(G) .$$
    The probability of every non-edge being added in process (C) is higher than it is in process (B) which is in turn higher than process (A). 
    By monotonicity of $k$-connectivity (as a graph property), the probability that $G_k$ is a $k$-connected graph is therefore also higher in process (C) than it is in process (A), and is therefore at least $1 - n^{-c}$.
\end{proof}

The counter-positive of Theorem~\ref{thm:stochastic_dist_k-connectivity} means that any graph $G$ which is $O\left((k\cdot n \log n) / s\right)$-stochastically-far from being $k$-connected has a set $W$ of at most $s$ nodes with a cut smaller than $k$. We call such set $W$ an $(s,k)$-\emph{witness}, or simply a \emph{witness} when $k$ and $s$ are clear from context.

\section{Distributed algorithm for $k$-connectivity}
In this section, we provide a distributed algorithm that detects an $(s,k)$-witness within $O(s^4 \log n)$ rounds in the CONGEST model w.h.p. 
We state with the following useful lemma\footnote{The original lemma in \cite{GoldreichR02} uses the notion of a \emph{$j$-extreme} node set which is equivalent to our notion of a $(s,k$)-witness.}:
\begin{lemma}[Lemma 3.16 in~\cite{GoldreichR02}]
    \label{lem:cheap_spanning_tree}
    Let $W$ be an $(s,k)$-witness. Suppose that each edge in the graph is independently assigned a uniformly distributed cost in $[0,1]$. Then, with probability at least $\Theta\left(s^{-2(1-1/k)}\right)$, $W$ contains a spanning tree such that every edge in the tree has cost smaller than any edge in the cut $E(W, V\setminus W)$.
\end{lemma}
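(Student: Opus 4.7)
The plan is to recast the event as a run of Karger's edge-contraction algorithm on a suitable contracted multigraph. Let $H$ be obtained from $G$ by contracting $V\setminus W$ into a single super-node $u$, so that $H$ has vertex set $W \cup \{u\}$, contains $G[W]$ inside it, and realizes the $\kappa := c(W) < k$ cut edges as parallel edges between $W$ and $u$. Assigning uniformly random costs in $[0,1]$ and running Kruskal on $H$ is distributionally equivalent to running Karger's algorithm (contract a uniformly random remaining edge at each step). The lemma's event --- that $G[W]$ contains a spanning tree whose edges are all cheaper than every cut edge --- is exactly the statement that the first $|W|-1$ merging edges of Kruskal on $H$ all lie inside $W$, which is exactly the statement that Karger on $H$ terminates with the output cut $(W, \{u\})$.

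The next ingredient is structural: since $W$ realizes $s_k(G)$, every proper non-empty subset $U \subsetneq W$ satisfies $c(U) \geq k$ in $G$ (otherwise $U$ would be a strictly smaller member of $\mathcal{S}_k(G)$). Translated to $H$, the only cut of size less than $k$ is the target cut $(W, \{u\})$; in particular $\deg_H(v) \geq k$ for every $v \in W$, while $\deg_H(u) = \kappa$. This gap-in-cut property is preserved under any contraction step that does not destroy the target cut, because cuts of the contracted graph correspond bijectively to cuts of the original graph compatible with the contraction, with the same size.

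With this in hand the remainder is the standard Karger--Stein product bound. At a step with $n'$ vertices and $m'$ remaining edges, conditional on not having destroyed the target cut, the degree property gives $2m' \geq (n'-1)k + \kappa$, so the probability of destroying the target cut at this step is at most $\kappa/m' \leq 2\kappa/((n'-1)k)$. Multiplying survival probabilities across the $s-1$ contraction steps (from $n' = s+1$ down to $n' = 3$) yields the telescoping product
\[
    \prod_{j=2}^{s} \frac{j-\alpha}{j}, \qquad \alpha := \frac{2\kappa}{k} \leq 2\left(1 - \frac{1}{k}\right),
\]
which by standard $\Gamma$-function asymptotics is $\Omega(s^{-\alpha}) \geq \Omega(s^{-2(1-1/k)})$. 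The step I expect to be the main obstacle is the structural translation in the second paragraph: the Karger analysis needs every non-target cut in $H$ to have size at least $k$, and this really uses that $W$ is of minimum size in $\mathcal{S}_k(G)$ rather than merely an arbitrary witness of size $\leq s$; once this minimality is justified (or one reduces to a minimum-size witness inside $W$), the rest of the argument is a routine Karger-style calculation.
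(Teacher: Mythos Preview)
The paper does not prove this lemma at all: it is quoted verbatim as Lemma~3.16 of Goldreich--Ron and used as a black box (the only surrounding text is the footnote remarking that their ``$j$-extreme'' notion matches the paper's $(s,k)$-witness). So there is no in-paper proof to compare against.

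That said, your proposal is exactly the standard argument behind the Goldreich--Ron lemma: contract $V\setminus W$ to a super-node, observe that the desired event is precisely ``Karger/Kruskal on the contracted multigraph outputs the target cut $(W,\{u\})$'', and then run the usual survival-probability product using that every non-target side has degree at least $k$. Your telescoping product and the $\Gamma$-function asymptotics are correct. The one substantive point you already flag is the right one: the degree lower bound $\deg_H(S)\geq k$ for super-vertices $S\subsetneq W$ needs that no proper non-empty subset of $W$ lies in $\mathcal{S}_k(G)$. In Goldreich--Ron this is baked into the definition of a $j$-extreme set (their hypothesis is a minimality condition, not merely ``size $\leq s$ and cut $<k$''); the paper's footnote is glossing over exactly this. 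Your suggested fix---pass to an inclusion-minimal witness inside $W$---is the correct reduction and suffices for the algorithmic application, since the procedure only needs to discover \emph{some} set of size $\leq s$ with cut $<k$, not $W$ itself.
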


Specifically, if the tree guaranteed by the above lemma exists, the MST must also have the same guarantees. Furthermore, as all edge costs are unique w.h.p., the MST is unique w.h.p. When referring to the above lemma, it will be convenient to refer to a single tree, the MST induced by the cost function (i.e., when referring to the spanning tree guaranteed by Lemma~\ref{lem:cheap_spanning_tree}, we always refer to an MST).

Given a starting vertex $w$ and a size bound $s$, consider the following random process: 
\begin{itemize}
    \item Input: start vertex $u\in V$, size bound $s\in \N$.
    
    \item Assign uniformly random weights to all edges: $w: E \to [0,1]$.
    
    \item Start with a singleton set $W = \set{u}$. As long as $\card{W} < s$, repeat the following:
    \begin{itemize}
        \item Choose the smallest cut edge $e = \mathrm{argmin}_{e \in E(W, V \setminus W)} w(e)$.
        
        \item Update $W$ to include the new node of $e$ in $V \setminus W$.
        
        \item If $c(W) < k$ then declare $W$ as an $(s,k)$-witness.

    \end{itemize}
     
\end{itemize}

Lemma~\ref{lem:cheap_spanning_tree} implies that repeating the above procedure $\Tilde{\Theta}(s^2)$ times detects a witness w.h.p.

In~\cite{GoldreichR02} a small amount of random nodes is sampled, and the above procedure is applied for each of them. This is possible since any graph that is far from $k$-connectivity in \emph{Hamming} distance contains many witnesses.
In our case, a graph that is \emph{stochastically} far only guarantees a single witness, and therefore it is costly to detect the witness in the standard (query) model. We leverage the distributed nature of the system to apply the witness detection procedure to all nodes in parallel.

\paragraph{Distributed implementation}

The above algorithm admits a simple distributed implementation. First, we assign random weights in $[0,1]$ to all edges in the graph, this guarantees that with some small probability there exists a spanning tree within the witness component satisfying the guarantees of Lemma~\ref{lem:cheap_spanning_tree}. Assume this tree exists and denote this tree by $T$. It will be clear from our algorithm that if this tree does not exist, the running time does not change however we may simply not detect the witness component if it exists.

Every node grows a cluster as follows. Every cluster can be implemented via a tree rooted at a leader node. All nodes in the cluster know for each neighbor whether it is in the cluster or not. To pick the next edge to be added to the cluster the edge weights are propagated along the tree towards the root, taking the minimum at every node. Finally, the root propagates the minimum value to all nodes in the cluster and the corresponding edge is added to the cluster. After each edge addition the cluster checks whether the current cut size is smaller than $k$ and terminates if this is the case (a witness is found).
 For ease of analysis let us define an \emph{iteration} as the step of adding a single edge (and vertex) to the cluster. When running this procedure from multiple nodes we can have every cluster wait for $\Theta(i)$ rounds to complete the $i$-th iteration. This will guarantee when running this procedure from multiple nodes that the iterations are executed in lock-step. We execute this process for $s$ iterations. This results in a running time of $O(s^2)$ rounds.

In order to detect a witness component, we must execute the above algorithm to completion, starting from a node within the witness component. We do not know which nodes are within the witness component before executing the algorithm, and running the procedure to completion may cause congestion. To overcome this, we execute the procedure from all nodes simultaneously, but prioritize executions that overlap according to a priority condition.

The priority of the cluster at iteration $i$ is the largest edge cost that was added to the cluster so far. Let us call this the \emph{max-edge} value.
Whenever multiple executions reach the same node, executions with higher max-edge values are terminated (ties are broken via the cluster root IDs). As clusters add edges greedily according to edge weights, we know that only edges from $T$ will be added to clusters whose execution starts from the witness component. That is, all edges in $T$ will be added, at which point the small cut will be detected and all nodes in the cluster will reject. 

As we terminate executions that visit the same node, we never experience congestion. It is sufficient to prove that at least one cluster execution that starts within the witness component survives. As all edges in the tree have weights smaller than the cut edges this means that the execution of clusters that start within the witness component is not affected by clusters that start from outside the component. This is because they always have a lower max-edge value due to guarantees on the weights of $T$.
Furthermore, in every iteration at least one cluster in the witness component is not terminated. That is, the cluster with highest priority inside the witness component at iteration $i$ must survive. We conclude that there exists a cluster construction that start within the component and terminates within $O(s^2)$ rounds. To get the guarantee of \Cref{lem:cheap_spanning_tree} w.h.p., we must repeat the process $\Tilde{O}(s^2)$ times. Thus, the final running time is $\Tilde{O}(s^4)$. We state the following theorem:

\kconnectivityTester*
\bibliographystyle{alpha}
\bibliography{refs}
\appendix
\section{Robustness of Stochastic Closeness}
\label{sec:definition_robustness}
The following claim is used to show robustness of stochastic closeness (\Cref{def:stochastic_dist}) with respect to the choice of the global constant $c$. Intuitively speaking, by repeating the random addition only a constant number of times, the probability of not attaining the property $\mathcal{P}$ can be reduced to any small polynomial.
\begin{claim}
    Fix a monotone property $\mathcal{P}$, a graph $G\notin \mathcal{P}$ over $n$ vertices, and constant $c > 1$.
    If for some $t\in\left[0,\card{\bar{E}}\right]$, it holds that
    \[
         \Pr[\mathrm{Add}(G,t) \notin \mathcal{P}] \leq n^{-c} ,
    \]
    then for any $m\in \N$ such that $m t \in\left[0,\card{\bar{E}}\right]$, it holds that
    \[
         \Pr[\mathrm{Add}(G,mt) \notin \mathcal{P}] \leq n^{-mc} .
    \]
\end{claim}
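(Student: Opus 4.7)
The plan is to reduce a single random addition with parameter $mt$ to $m$ independent random additions with parameter $t$ via a monotone coupling. The key idea: the union of $m$ independent draws of $\mathrm{Add}(G,t)$ fails to be in $\mathcal{P}$ only if all $m$ draws individually fail, which happens with probability at most $n^{-mc}$ by independence; and this union is edge-wise stochastically dominated by $\mathrm{Add}(G, mt)$, so monotonicity of $\mathcal{P}$ transfers the bound.

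Concretely, I would first let $H_1, \ldots, H_m$ be independent copies of $\mathrm{Add}(G, t)$ and define $G^\ast$ to be the graph on $V$ whose edge set is $\bigcup_{i=1}^m E(H_i)$. Because $\mathcal{P}$ is monotone non-decreasing and $G^\ast \supseteq H_i$ for each $i$, the event $G^\ast \notin \mathcal{P}$ implies $H_i \notin \mathcal{P}$ for every $i$. Independence then yields $\Pr[G^\ast \notin \mathcal{P}] \leq \prod_{i=1}^{m} \Pr[H_i \notin \mathcal{P}] \leq n^{-mc}$.

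Next, I would compare the marginal probabilities with which each non-edge $e \in \bar{E}$ is present in the two processes. In $G^\ast$, $e$ appears iff at least one of the $m$ rounds added it, giving probability $1 - (1-p)^m$ where $p = t/|\bar{E}|$, and different non-edges are jointly independent. In $\mathrm{Add}(G, mt)$, $e$ appears independently with probability $mp = mt/|\bar{E}|$. Bernoulli's inequality $(1-p)^m \geq 1-mp$ gives $1-(1-p)^m \leq mp$, so the per-edge marginal under $\mathrm{Add}(G,mt)$ dominates that under $G^\ast$.

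Finally, because the non-edges are jointly independent in both processes, one can couple them by drawing a single $U_e$ uniformly on $[0,1]$ for each non-edge $e$, and declaring $e$ present in $G^\ast$ iff $U_e \leq 1-(1-p)^m$ and in $\mathrm{Add}(G,mt)$ iff $U_e \leq mp$. Under this coupling $E(G^\ast) \subseteq E(\mathrm{Add}(G,mt))$ deterministically, so monotonicity of $\mathcal{P}$ yields $\Pr[\mathrm{Add}(G, mt) \notin \mathcal{P}] \leq \Pr[G^\ast \notin \mathcal{P}] \leq n^{-mc}$. The only delicate point is making the coupling rigorous: once per-edge dominance plus joint independence are in hand, pointwise inclusion is immediate, and the rest reduces to a product bound and Bernoulli's inequality.
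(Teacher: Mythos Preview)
Your proposal is correct and follows essentially the same approach as the paper: take $m$ independent copies of $\mathrm{Add}(G,t)$, form their union, bound its failure probability by $n^{-mc}$ via monotonicity and independence, then compare per-edge marginals to show $\mathrm{Add}(G,mt)$ stochastically dominates the union. The only cosmetic differences are that the paper phrases the inequality $1-(1-p)^m \leq mp$ as a union bound rather than citing Bernoulli, and leaves the coupling implicit where you spell it out explicitly.
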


\begin{proof}
    Recall that in the randomly augmented graph $\mathrm{Add}(G,t)$ each edge $e\in \bar{E}$ is added independently with probability $t/|\bar{E}|$.
    Let $G_1,\dots, G_m$ be $m$ independent copies of $\mathrm{Add}(G,t)$. By definition, we have
    \[
        \Pr\left[\bigwedge_{i=1}^{m}\  G_i \notin \mathcal{P}\right] 
        = \prod_{i=1}^{m} \Pr\left[ G_i \notin \mathcal{P} \right]
        \leq n^{-mc} .
    \]
    Define $G' = \bigcup_{i=1}^{m} G_i$, the union of all $m$ randomly augmented graphs. $G'$ trivially contains all edges of $G$.
    On the one hand, the probability that each edge $e\in \bar{E}$ appears in $G'$ is at most $(mt)/|\bar{E}|$, by a union bound over the $m$ attempts to add it. Since the property $\mathcal{P}$ is monotone, we have
    \[
        \Pr\left[Add(G,mt) \notin \mathcal{P}\right]
        \leq \Pr\left[G' \notin \mathcal{P}\right] .
    \]
    On the other hand, $G'$ is a supergraph of $G_i$ for all $i\in[m]$. Using again the monotonicity of $\mathcal{P}$, we can write
    \[
        \Pr\left[G' \notin \mathcal{P}\right]
        = \Pr\left[\bigwedge_{i=1}^{m}\  G' \notin \mathcal{P}\right]
        \leq \Pr\left[\bigwedge_{i=1}^{m}\  G_i \notin \mathcal{P}\right]
        \leq n^{-mc} ,
    \]
    which concludes the proof.
\end{proof}
\end{document}